\pdfoutput=1
\documentclass[10pt,pra,aps,superscriptaddress,nofootinbib,twocolumn,longbibliography]{revtex4-1}
\usepackage{amsmath,bbm}
\usepackage{amsthm}
\usepackage{amsmath}
\usepackage{latexsym}
\usepackage{amssymb}
\usepackage{float}
\usepackage{braket}
\usepackage{quantikz}
\usepackage{tikz}
\usepackage{graphicx}           
\usepackage{textcomp}
\usepackage{color}
\usepackage{xcolor}
\usepackage{mathpazo}
\usepackage{comment}
\usepackage{enumitem}
\usepackage{multirow}
\usepackage{setspace}
\usepackage{subcaption}
\usepackage[colorlinks=true,linkcolor=blue,citecolor=magenta,urlcolor=blue]{hyperref}
\usepackage{svg}

\newcommand{\be}{\begin{equation}}
\newcommand{\ee}{\end{equation}}
\newcommand{\bea}{\begin{eqnarray}}
\newcommand{\eea}{\end{eqnarray}}
\def\squareforqed{\hbox{\rlap{$\sqcap$}$\sqcup$}}
\def\qed{\ifmmode\squareforqed\else{\unskip\nobreak\hfil
\penalty50\hskip1em\null\nobreak\hfil\squareforqed
\parfillskip=0pt\finalhyphendemerits=0\endgraf}\fi}
\def\endenv{\ifmmode\;\else{\unskip\nobreak\hfil
\penalty50\hskip1em\null\nobreak\hfil\;
\parfillskip=0pt\finalhyphendemerits=0\endgraf}\fi}

\newcommand{\I}{\mathbbm{1}}

\newcommand{\la}{\langle}
\newcommand{\ra}{\rangle}

\makeatletter
\newtheorem*{rep@theorem}{\rep@title}
\newcommand{\newreptheorem}[2]{%
\newenvironment{rep#1}[1]{%
 \def\rep@title{#2 \ref{##1}}%
 \begin{rep@theorem}}%
 {\end{rep@theorem}}}
\makeatother

\newtheorem{thm}{Theorem}
\newreptheorem{thm}{Theorem}

\newtheorem*{example*}{Example}

\begin{document}

\title{Local Marking of Locally Implementable Unitary Operations}

\author{Adil Imam}
\affiliation{Indian Institute of Technology Kanpur, Kalyanpur, Kanpur 208016, India}
 \author{Satyaki Manna}
 \affiliation{Department of Physics, School of Basic Sciences, Indian Institute of Technology Bhubaneswar, Odisha 752050, India}

\begin{abstract}
We investigate the task of local marking for locally implementable unitary operations. In this setting, multipartite quantum unitary channels, chosen randomly from a known set, are distributed among spatially separated parties without revealing their identities. The objective is to correctly identify (mark) the applied process using only local operations supplemented with classical communication (LOCC). While local distinguishability implies local marking, local marking does not guarantee either local or even global distinguishability of a set of unitaries. Thus the task of marking is not equivalent to the task of discrimination. We demonstrate a stronger manifestation of \emph{nonlocality without entanglement} by constructing a set of globally distinguishable tripartite product unitaries that cannot be locally marked. In contrast to state marking, we find that marking a subset of product unitaries does not imply the ability to mark a larger subset. Finally, we explore the hierarchy of probes—entangled and product—in the context of local marking with respect to the standard discrimination scenario.
\end{abstract}
\maketitle


\section{Introduction}
Discrimination of different physical processes has  been an important avenue of research for understanding the subtlety of quantum theory and consequently, it has been proved to be useful in many information-theoretic tasks such as quantum cryptography\cite{Bae_2015,JánosABergou_2007} and quantum communication \cite{Manna,manna2026,pandit,pandit2025}. In classical theory, the notion of discrimination is straightforward. In contrast, this concept offered by quantum theory is significantly more peculiar. A well-known counterintuitive phenomenon is \emph{nonlocality without entanglement}\cite{hardy,benett,vedral,Virmani_2001,watrous05,Halder,ghosh_s,Bhattacharya,manna20,manna4,He2024,niset}, which shows that a set of distinguishable multipartite processes cannot be perfectly distinguished if spatially separated parties are restricted to local operations and classical communication (LOCC). Beyond such discrimination tasks, \emph{Sen et al.} introduced a variant called \emph{local state marking}\cite{Sen,sen2}. In this task, a subset of states, chosen randomly from a known set of orthogonal multipartite states, is given to spatially separated parties without revealing their identities, and the goal is to mark their identities using LOCC. We extend this novel marking game in the context of quantum unitary channels.

Although discrimination of quantum channels is well studied\cite{AYuKitaev_1997,acin2001,duan2009,watrous2,njp2021,Manna_2,manna2025,manna25,manna3,npj,harrow,feng,GMauro}, their distinguishability under the paradigm of LOCC \cite{Duan3,manna4,Heng} remains much less explored due to the higher complexity compared to state discrimination. For the same reason, other discrimination protocols for channels are largely absent in the literature. While the definition of marking used here applies to any set of unitaries, we focus on locally implementable unitaries for most of our results. Such locally implementable (product) unitaries make LOCC protocols more subtle by introducing an additional layer of communication, as demonstrated in \cite{manna4}.

Unitary operators represent one of the most fundamental quantum channels and have a significant importance in quantum communication \cite{densecoding,Brub}, information processing \cite{science.1090790}, error correction \cite{You_2012}, cryptography \cite{ekert}, and computation\cite{HUANG2022127863}. While there exists a body of research on the discrimination or antidiscrimination of unitary operators \cite{acin2001,manna25,manna2025,GMauro,Duan3,feng,manna4} in various contexts, the problem of marking provides a new class of protocols which can be useful in future informational tasks. 

In this work, we systematically formulate the problem of local marking of locally implementable unitary operations. After introducing the necessary preliminaries, we review two discrimination protocols—global and LOCC—already studied in the literature. We then define the marking protocol: a subset of unitaries, chosen randomly from a known set (not necessarily distinguishable), is given to spatially separated parties without revealing their identities, and the goal is to mark them using LOCC. This differs from the state-marking scenario, where the given set is assumed to be distinguishable.
In the first part of our results, we establish relations among global discrimination, LOCC discrimination, and local marking. We show that local discrimination always implies local marking. However, there exist sets of unitaries that are markable but neither locally nor globally distinguishable. We provide a necessary and sufficient condition for two globally indistinguishable unitaries to be markable. We also show that any two bipartite globally distinguishable unitaries are always markable. In contrast, for more than two parties, there exist sets of unitaries that are globally distinguishable but not markable. This result can be viewed as a parallel to the phenomenon of \emph{nonlocality without entanglement}. In our case, we introduce a task that is strictly stronger than local discrimination; nevertheless, it remains strictly weaker than global discrimination.

We further show that if a subset of unitaries is markable, then any smaller subset is also markable, while the converse does not hold. In particular, we construct a set of three unitaries such that any pair can be marked, but all three cannot be marked together. This phenomenon emerges strikingly opposite to the case of quantum states. Finally, we compare two scenarios: marking with single system (product) probes and discrimination with entangled probes. We present a set of unitaries that cannot be distinguished using product probes but can still be marked using them.

Before diving to our main motive, we discuss some known protocols regarding the discrimination of unitary operations.


\section{Preliminaries}

\subsection{Discrimination of Unitary operations}\label{ssA}
Unitary operator $(U)$ is a linear operator $U:H\rightarrow H$ on a Hilbert space $H$ that satisfies $U^\dagger U=UU^\dagger=\I$.
We consider a previously known set of $m$ unitaries $\{U_i\}_{i=1}^m$ acting on a $d$-dimensional quantum state. The unitaries are sampled from a given probability distribution. To distinguish the unitaries, a known quantum state (single or entangled) is given into the unitary device and the device carries out one of $m$ unitary operations. After this process, the device gives an transformed state as the output. Therefore, any measurement can be performed on the transformed state. Let us describe this measurement by a set of POVM (Positive operator valued measurement) elements $\{N_{b}\}$, where $b\in\{1,\cdots,m\}$. The protocol is successful in distinguishing the unitaries if $b$ is the same as $i$. Any classical post-processing of outcome $b$ can be included in the measurement $\{N_{b}\}$. Distinguishability of a set of unitaries consequently reduces to the distinguishability of evolved states. Two unitaries $U_1$ and $U_2$ are distinguishable if $\min|con\{e^{\mathbbm{i}\theta_j}\}|= 0$, where  $\{e^{\mathbbm{i}\theta_j}\}_j$
are the eigenvalues of $U_1^\dagger U_2$, $con\{e^{\mathbbm{i}\theta_j}\}$ denotes the set of complex numbers that can be written as the convex combinations of $\{e^{\mathbbm{i}\theta_j}\}_j$ and $\min|\cdot|$ denotes the minimum norm over all those complex numbers.
More details of the distinguishability of unitaries can be found in \cite{manna25}.
\subsection{Global discrimination of unitary operations (GD)}
In this section, we present general protocol for global discrimination of $m$ $n$-partite product unitaries 
$\{U_i\}_{i=1}^m = \{\otimes_{k=1}^n U_i^{(k)}\}_{i=1}^m$, where local dimensions may vary across parties. 
The party sequentially selects subsets of subsystems to perform discrimination.
First, an arbitrary subset $\mathbbm{S}_1 \subseteq [n]$ with $|\mathbbm{S}_1| = n_1 \leq n$ is chosen. 
A probing state $\rho_{AB}$ is prepared on $\mathcal{H}_A \otimes \mathcal{H}_B$, where 
$\mathcal{H}_A = \bigotimes_{k \in \mathbbm{S}_1} \mathcal{H}_k$. 
The evolved states are
$
\sigma_{AB}^i = \left( \bigotimes_{k \in \mathbbm{S}_1} U_i^{(k)} \otimes \mathbb{I}_B \right)
\rho_{AB}
\left( \bigotimes_{k \in \mathbbm{S}_1} U_i^{(k)} \otimes \mathbb{I}_B \right)^\dagger .
$
A measurement is performed to distinguish $\{\sigma_{AB}^i\}$. 
If perfect discrimination is achieved, the protocol terminates. 
Otherwise, some unitaries are eliminated, and another subset 
$\mathbbm{S}_2 \subseteq [n]\setminus \mathbbm{S}_1$ is selected. 
The procedure is repeated until all subsystems are exhausted.
Each ordered sequence of subsets defines a strategy corresponding to a partition of $[n]$. 
The user considers all permutations of subsets across all partitions and applies any strategy that perfectly distinguishes the unitaries.

\subsection{LOCC discrimination of unitary operations (LD)}
Local discrimination strategy involves $n$ spatially separated parties, indexed by $k = \{1,\cdots,n\}$, such that each party has access to one partite from the $n$-partite unidentified unitary 
$U_i = U^{(1)}_{i}\otimes U^{(2)}_{i}\otimes\cdots\otimes U^{(n)}_{i} = \bigotimes_{k=1}^{n} U^{(k)}_{i}$. 
Such a protocol allows for many possible sequences in which the parties may participate.
Suppose the protocol begins with the party $k$, who prepares a state $\rho_{A_kB_k}$ and applies $U^{(k)}_{i}$ from the unknown unitary $U_{i}$. On the evolved state, the party executes a quantum measurement and obtains an outcome. Based on this outcome, the party rules out certain possibilities for $U^{(k)}_{i}$, and consequently for $U_i$, and communicates the remaining candidates for $U_i$ to next party, say $k'$. 

Using this updated information, the $k'$-th party prepares the best probing state $\rho_{A_{k'}B_{k'}}$, applies the corresponding local unitary, and further eliminates some of the remaining possibilities for $U_i$. This procedure continues sequentially, with each party refining the set of candidate unitaries through local quantum operations and classical communication.

Importantly, the choice of the next party depends on both the classical outcome obtained in the previous step and the information accumulated in earlier rounds. In this way, the $n$ spatially separated parties iteratively reduce the set of possible unitaries until a unique unitary is identified. In this protocol, we take only the one-way communication between the parties.

\section{Local Marking of Unitary Operations (LM)}
Any $r$ number of unitaries are randomly chosen from a known set of $n$-partite $m$ unitaries $\{U_i\}_{i=1}^m = \{\otimes_{k=1}^n U_i^{(k)}\}_{i=1}^m$ and they are distributed among spatially separated $n$ parties without revealing the identity of each unitary. The task of the parties is to perfectly mark each of the unitaries by using local operation and classical communication. We denote this task as $r$-LM. $r$ can take all values from $1$ to $m$, where $1$-LM is the task of LD of a set of unitaries. 

For an example, let us take a set $\mathcal{T}$ consisting of three bipartite unitaries, i.e., $m=3$ and $n=2$. To successfully execute $r$-LM, we need to consider all possible $r$-combinations from these three unitaries, where $r \leq 3$. 

For $2$-LM, the parties need to mark three sets of unitaries, namely $\{U_1,U_2\}$, $\{U_1,U_3\}$, and $\{U_2,U_3\}$. To mark the first set, Alice and Bob need to distinguish the set $\{U_1 \otimes U_2,\, U_2 \otimes U_1\}$ within the LOCC paradigm. Similarly, for marking the other two pairs, they need to distinguish the sets $\{U_i \otimes U_j\}_{i \neq j}$. If they can successfully mark all three sets, then the set $\mathcal{T}$ is said to be $2$-markable ($2$-LM). 

On a similar note, $3$-LM implies the local distinguishability of the set 
$\{U_1 \otimes U_2 \otimes U_3,\, 
U_1 \otimes U_3 \otimes U_2,\, 
U_2 \otimes U_1 \otimes U_3,\, 
U_2 \otimes U_3 \otimes U_1,\, 
U_3 \otimes U_1 \otimes U_2,\, 
U_3 \otimes U_2 \otimes U_1\}$.

\section{Results}
We start with the results which encapsulates the relation between three protocols- GD, LD and LM. Our first theorem provides a sufficient condition of perfect local marking of a set of unitaries.
\begin{thm}\label{1}
    Perfect LD always implies perfect LM.
\end{thm}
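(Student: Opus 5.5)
The idea is to reduce $r$-LM to running the LD protocol once per copy. Fix a set $\{U_i\}_{i=1}^m$ of $n$-partite product unitaries $U_i=\bigotimes_{k=1}^n U_i^{(k)}$ that is perfectly locally distinguishable. Let $\mathcal{P}$ denote the LOCC protocol, as described in the LD subsection, that identifies an unknown $U_i$ with certainty. For $r$-LM, the parties receive $r$ distinct unitaries $U_{i_1},\dots,U_{i_r}$ in unknown order. Party $k$ holds the local factors $U_{i_1}^{(k)},\dots,U_{i_r}^{(k)}$, with the $j$-th slot labelled consistently across all parties. Marking means distinguishing the set $\{\bigotimes_{j=1}^r U_{\pi(j)}\}_{\pi}$ by LOCC, where $\pi$ ranges over injective assignments of labels from the chosen subset to the slots.

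The protocol runs $\mathcal{P}$ on each slot separately, sequentially in $j=1,\dots,r$. For slot $j$, the parties apply $\mathcal{P}$ using only the local components $U_{i_j}^{(k)}$ sitting in slot $j$. They use fresh local probe states $\rho_{A_kB_k}$ and the same classical communication pattern as in $\mathcal{P}$. Since each party acts only on its own share of slot $j$ and communicates classically, the whole procedure is LOCC for the $rn$-component product unitary grouped by party.

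The key step is to check that running $\mathcal{P}$ on slot $j$ identifies $U_{i_j}$ with certainty, irrespective of what sits in the other slots. Because everything is a tensor product, the evolved state on slot $j$'s probes is exactly what $\mathcal{P}$ would see for input $U_{i_j}$. The other slots act on different systems, so the measurement statistics for slot $j$ depend only on $U_{i_j}$. Perfection of $\mathcal{P}$ then gives the label $i_j$ with probability one. After $r$ rounds every slot is marked, which establishes perfect $r$-LM for all $r$ and all $r$-subsets, hence LM. One could also stop early, since the last slot is determined by elimination, but this is not needed.

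The main subtlety, rather than a real obstacle, is being careful about the definition. In $r$-LM the candidate set for each slot is the subset of chosen unitaries, and the slots are correlated because their labels must be distinct. I must argue that these correlations do not hurt. Perfect discrimination over the full set $\{U_i\}_{i=1}^m$ implies perfect discrimination over any subset of candidates, since outcomes are supported on the true label. It also works for any joint (even correlated) prior over the slots, because perfection holds for every fixed input. Finally, $1$-LM coincides with LD by definition, which is consistent with the construction.
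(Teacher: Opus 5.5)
Your proposal is correct and is essentially the paper's argument. The paper splits the permuted set into groups $Q_l=U_l\otimes\mathcal{S}_{\mathcal{P}[\{m\}/l]}$, identifies $l$ by running the LD protocol on the first slot while the remaining slots are untouched, and then recurses to $Q_{l,l'}$ and onward; this is exactly your slot-by-slot execution of $\mathcal{P}$. Your version is simply more explicit about two points the paper leaves implicit: why the other slots cannot affect slot $j$'s statistics, and why perfection over the full set carries over to arbitrary $r$-subsets with correlated slot labels.
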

\begin{proof}
We start with an arbitrary set of $m$ unitaries and denote the set by $\mathcal{S}\equiv\{U_j\}_{j=1}^m$. We assume this set of unitaries be locally distinguishable. We want to look at the marking properties of this set. This problem can be interpreted as a distinguishability problem for the set of unitaries $\mathcal{S}_{\mathcal{P}[\{m\}]}\equiv\{\mathcal{P}(\otimes_{j=1}^m U_j)\}$, where $\{\mathcal{P}(\otimes_{j=1}^m U_j)\}$ denotes the set of unitaries in the form of tensor product obtained through permutations of the $j$ indices. The unitaries belonging to the set $\mathcal{S}_{\mathcal{P}[\{m\}]}$ can be further decomposed into disjoint groups $Q_l$, where we express these groups as
\begin{align}
    Q_l=U_l\otimes\mathcal{S}_{\mathcal{P}[\{m\}/l]}
\end{align}
We started with the assumption that the set of unitaries $\{U_j\}_j$ are locally distinguishable, then by local operations on the first part of these groups, we can mark the exact group of the particular unitary. Due to the LOCC protocol, the second part does not change. This allows us to further partition the group of unitaries and rewrite them as
\begin{align}
    Q_{l,l'}=U_{l}\otimes U_{l'}\otimes\mathcal{S}_{\mathcal{P}[\{m\}/\{l,l'\}]}
\end{align}
where $l$ denotes the marked unitary after first LOCC has been performed.

By similar arguments, we can mark the index $l'$ through LOCC on the $U_{l'}$ part. Therefore, continuing this protocol, we can show that we can perfectly mark the set of unitaries $\{U_j\}_j$. This completes the proof. 
\end{proof}
At this point, the obvious question arises if the reverse statement of Theorem \ref{1} is true? In the next result, we prove much stronger statement which consequently gives a counter argument of the reverse statement.
\begin{thm}\label{thm2}
 There exist two globally indistinguishable bipartite unitaries $V_1=V_1^A\otimes V_1^B$ and $V_2=V_2^A\otimes V_2^B$ that can be marked successfully. The necessary and sufficient condition is,
 \bea\label{c1}
&(i)& con\{e^{\mathbbm{i}(\theta_i-\theta_j)}\}_{i,j=1}^d=0 \nonumber\\
&&\text{  or  } con\{e^{\mathbbm{i}(\Theta_i-\Theta_j)}\}_{i,j=1}^d=0;\\
&(ii)& con\{e^{\mathbbm{i}(\theta_i+\Theta_j)}\}_{i,j=1}^d\neq 0,\label{c2}
 \eea
 where $eig((V_1^A)^\dagger V_2^A)\in\{e^{\mathbbm{i}\theta_j}\}_j$ and $eig((V_1^B)^\dagger V_2^B)\in\{e^{\mathbbm{i}\Theta_j}\}_j$. $eig(y)$ denotes the eigenvalues of $y$.
\end{thm}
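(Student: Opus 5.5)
The plan is to reduce marking of the pair $\{V_1,V_2\}$ to LOCC discrimination of the two product unitaries $V_1\otimes V_2$ and $V_2\otimes V_1$, as in the definition of $2$-LM. Here Alice holds $X_A=V_1^A\otimes V_2^A$ or $Y_A=V_2^A\otimes V_1^A$, and Bob correspondingly holds $X_B=V_1^B\otimes V_2^B$ or $Y_B=V_2^B\otimes V_1^B$. The two global hypotheses are therefore $X_A\otimes X_B$ versus $Y_A\otimes Y_B$.

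The first step is to compute the relevant eigenvalues. Put $W_A=(V_1^A)^\dagger V_2^A$ and $W_B=(V_1^B)^\dagger V_2^B$. Then
\begin{equation*}
X_A^\dagger Y_A=W_A\otimes W_A^\dagger ,
\end{equation*}
whose eigenvalues are $e^{\mathbbm{i}(\theta_i-\theta_j)}$, and similarly for Bob with $\Theta$. Meanwhile $V_1^\dagger V_2=W_A\otimes W_B$ has eigenvalues $e^{\mathbbm{i}(\theta_i+\Theta_j)}$. By the criterion recalled in Sec.~\ref{ssA}, three facts follow. Alice alone perfectly distinguishes her two local unitaries iff $0\in con\{e^{\mathbbm{i}(\theta_i-\theta_j)}\}$, which is what condition (i) expresses (read as $0$ belonging to the convex hull). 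Bob's case is identical with $\Theta$ in place of $\theta$. Finally, $V_1,V_2$ are globally indistinguishable iff $0\notin con\{e^{\mathbbm{i}(\theta_i+\Theta_j)}\}$, which is condition (ii).

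Sufficiency is then immediate. If (i) holds for Alice, say, she prepares the optimal (possibly ancilla-assisted) probe, applies her unknown local unitary, and measures. This perfectly tells her whether the arrangement is $V_1\otimes V_2$ or $V_2\otimes V_1$, which marks both unitaries. Condition (ii) is only the requirement that the pair be globally indistinguishable.

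For necessity, assume (i) fails for both parties and analyse a general one-way LOCC protocol. Each local unitary is applied only once, so the protocol is: a first party (say Alice) probes and measures, communicates the outcome, and then Bob probes and measures. Take any outcome of Alice's measurement that has nonzero probability under both hypotheses. Conditioned on it, Bob must perfectly distinguish $X_B$ from $Y_B$. Perfect discrimination does not depend on the priors, and Bob cannot achieve it because (i) fails for him. Hence every outcome of Alice is conclusive, which means Alice perfectly discriminates; this contradicts the failure of (i) for her. The same argument applies with the roles reversed. The main obstacle I expect here is making rigorous that no cleverer LOCC scheme exists: for instance, a party whose probe is held back, or shared classical randomness. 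I would handle this by noting that each party's action is a single probe followed by a measurement, with classical side information only refining the prior.

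Finally, existence is shown by an explicit example with $d=2$. Take $W_A=\mathrm{diag}(1,\mathbbm{i})$, so that the differences give $\{1,\mathbbm{i},-\mathbbm{i}\}$, whose hull contains $0=\tfrac12(\mathbbm{i}-\mathbbm{i})$. Take $W_B=\I$, for example $V_1^A=\I$, $V_2^A=W_A$, $V_1^B=V_2^B=\I$. Then $\{e^{\mathbbm{i}(\theta_i+\Theta_j)}\}=\{1,\mathbbm{i}\}$, whose hull avoids $0$. So $V_1,V_2$ are globally indistinguishable yet markable.
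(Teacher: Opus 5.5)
Your proposal is correct and follows the paper's route: reduce marking to LOCC discrimination of $V_1\otimes V_2$ versus $V_2\otimes V_1$, compute the spectra $\{e^{\mathbbm{i}(\theta_i-\theta_j)}\}$, $\{e^{\mathbbm{i}(\Theta_i-\Theta_j)}\}$ and $\{e^{\mathbbm{i}(\theta_i+\Theta_j)}\}$, and apply the convex-hull criterion. Your conditioning argument for necessity (any outcome of the first party that is possible under both hypotheses forces the second party to discriminate perfectly) is a more careful version of the paper's one-line claim, and it correctly gives the disjunction in condition (i) instead of the paper's stated requirement on each party.
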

\begin{proof}
Firstly, we impose the condition of indistinguishability on the unitaries $V_1$ and $V_2$. The eigenvalues of $V_1^\dagger V_2$ can be written as
\bea
eig(V_1^\dagger V_2)
&=& eig\!\left((V_1^A)^\dagger V_2^A\right)\cdot 
  eig\!\left((V_1^B)^\dagger V_2^B\right)\nonumber\\
&=& \{e^{\mathbbm{i}(\theta_i+\Theta_j)}\}_{i,j}.
\eea
From subsection~\ref{ssA}, perfect indistinguishability requires that the convex hull of the eigenvalues of $V_1^\dagger V_2$ does not contain the origin. This condition is sufficient to ensure~\eqref{c2}.

Next, successful marking corresponds to successful local distinguishability between the unitaries $V_1\otimes V_2$ and $V_2\otimes V_1$. Since each party has access to two local unitaries, irrespective of the LOCC protocol, the final step necessarily reduces to distinguishing between two unitaries at one party. Hence, a successful LOCC protocol requires perfect distinguishability of the corresponding local unitaries at each party.

For Alice, this amounts to discriminating between $V_1^A\otimes V_2^A$ and $V_2^A\otimes V_1^A$. We compute
\begin{align}
eig\!\left((V_1^A)^\dagger V_2^A \otimes (V_2^A)^\dagger V_1^A\right)
&= eig\!\left((V_1^A)^\dagger V_2^A\right)\cdot 
   eig\!\left((V_2^A)^\dagger V_1^A\right) \nonumber \\
&= \{e^{\mathbbm{i}\theta_i}\}_i \cdot \{e^{-\mathbbm{i}\theta_j}\}_j \nonumber \\
&= \{e^{\mathbbm{i}(\theta_i-\theta_j)}\}_{i,j=1}^d .
\end{align}
Imposing the condition of perfect distinguishability on Alice's side yields the first part of~\eqref{c1}. An analogous calculation on Bob's side gives the second part of~\eqref{c1}.
\end{proof}

An example in this direction makes it more clear.
\begin{example*}
    Let us consider two unitaries acting on $\mathbb{C}^2\otimes\mathbb{C}^2$ as follows:
    \bea
V'_1 &=& \ket{0}\bra{0}+e^{\mathbbm{i}\alpha_1}\ket{1}\bra{1}\otimes\ket{0}\bra{0}+e^{\mathbbm{i}\alpha_2}\ket{1}\bra{1},\nonumber\\
V'_2 &=& \ket{0}\bra{0}+e^{-\mathbbm{i}\alpha_3}\ket{1}\bra{1}\otimes\ket{0}\bra{0}+e^{-\mathbbm{i}\alpha_4}\ket{1}\bra{1},\nonumber\\
    \eea
such that $\alpha_1+\alpha_2+\alpha_3+\alpha_4<\pi$ and $\alpha_1+\alpha_3>\pi/2$ and $\alpha_1,\alpha_2,\alpha_3,\alpha_4\geq 0$.
\end{example*}
For checking the distinguishability of the unitaries $V_1'$ and $V_2'$, we compute 
\bea
eig(V_1'^{\dagger}V_2')&=&eig\Big((V_1'^A)^{\dagger}(V_2'^A)\Big)\cdot eig\Big((V_1'^B)^{\dagger}(V_2'^B)\Big)\nonumber\\
    &=&\left\{1,e^{-\mathbbm{i}(\alpha_1+\alpha_3)}\right\}\cdot \left\{1,e^{-\mathbbm{i}(\alpha_2+\alpha_4)}\right\}\nonumber\\
     &=&\left\{1,e^{-\mathbbm{i}(\alpha_1+\alpha_3)},e^{-\mathbbm{i}(\alpha_2+\alpha_4)},e^{-\mathbbm{i}(\alpha_1+\alpha_2+\alpha_3+\alpha_4)}\right\}.\nonumber\\
\eea
Since $\alpha_1+\alpha_2+\alpha_3+\alpha_4<\pi$, convex combination of above four eigenvalues does not include the origin. Hence, the two unitaries are indistinguishable.

For local marking, suppose Alice starts the protocol. The marking scheme boils down to discriminating between $(V'_1)^A\otimes (V_2')^A$ and $(V'_2)^A\otimes (V_1')^A$. We have
\bea   && eig\Big((V_1'^A)^{\dagger}V_2'^A\otimes(V_2'^A)^{\dagger}V_1'^A\Big)\nonumber\\
&=&eig\Big((V_1'^A)^\dagger V_2'^A\Big)\cdot eig\Big((V_2'^A)^\dagger V_1'^A\Big)\nonumber\\
    &=&\left\{1,e^{-\mathbbm{i}(\alpha_1+\alpha_3)}\right\}\cdot \left\{1,e^{\mathbbm{i}(\alpha_1+\alpha_3)}\right\}\nonumber\\
    &=&\left\{1,e^{-\mathbbm{i}(\alpha_1+\alpha_3)},e^{\mathbbm{i}(\alpha_1+\alpha_3)}\right\}.
\eea
Since $\alpha_1+\alpha_3>\frac{\pi}{2}$, the convex hull of the above set of eigenvalues will contain the origin. If Bob starts the protocol, Bob can not mark the unitaries. But upon trivial communication, Alice can mark the unitaries in the similar way.

Moving on, we ask the opposite question. Can distinguishability is sufficient condition for perfect markability? Next two theorems provide an answer to this.
 
\begin{thm}
    Two bipartite distinguishable unitaries are always markable.
\end{thm}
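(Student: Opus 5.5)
The plan is to prove the contrapositive using the eigenvalue criterion of Subsection~\ref{ssA}, and to reuse the reduction from the proof of Theorem~\ref{thm2}. Write $V_1=V_1^A\otimes V_1^B$ and $V_2=V_2^A\otimes V_2^B$. Let $\{e^{\mathbbm{i}\theta_i}\}_i$ be the eigenvalues of $(V_1^A)^\dagger V_2^A$ and $\{e^{\mathbbm{i}\Theta_j}\}_j$ those of $(V_1^B)^\dagger V_2^B$.

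Global distinguishability of $V_1,V_2$ means $0\in con\{e^{\mathbbm{i}(\theta_i+\Theta_j)}\}_{i,j}$. Marking means distinguishing $V_1\otimes V_2$ from $V_2\otimes V_1$ by LOCC. It suffices that one party, say Alice, can perfectly distinguish $V_1^A\otimes V_2^A$ from $V_2^A\otimes V_1^A$ on her own. She does this with a suitable probe on her two local slots and then announces the result, which marks both unitaries. As computed in the proof of Theorem~\ref{thm2}, this holds iff $0\in con\{e^{\mathbbm{i}(\theta_i-\theta_j)}\}_{i,j}$, and the analogous condition on the $\Theta$'s holds for Bob. So it is enough to show: if both difference hulls miss the origin, then the sum hull also misses it.

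\textbf{Step 1: a difference hull missing $0$ forces a small spread.} Suppose $0\notin con\{e^{\mathbbm{i}(\theta_i-\theta_j)}\}$. The difference set contains $1$ (take $i=j$) and is closed under $\delta\mapsto-\delta$. For any pair, let $\delta=\theta_i-\theta_j$, reduced to $(-\pi,\pi]$. The hull of $\{1,e^{\mathbbm{i}\delta},e^{-\mathbbm{i}\delta}\}$ contains the real point $\cos\delta$. It also contains $1$, so it contains $0$ whenever $\cos\delta\le0$. Hence every wrapped pairwise difference satisfies $|\delta|<\pi/2$.

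Now fix $\theta_1$. Every $\theta_i$ lies in $(\theta_1-\pi/2,\theta_1+\pi/2)$. This is an arc of length $\pi$, so wrapped differences inside it agree with actual differences. Therefore all $\theta_i$ lie in an arc of length $<\pi/2$, and likewise all $\Theta_j$.

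\textbf{Step 2: conclude.} Each sum $\theta_i+\Theta_j$ then lies in an open arc of length $<\pi$. A finite set of unit-modulus points in such an arc lies in an open half-plane through the origin. Hence its convex hull excludes $0$, so $V_1,V_2$ are globally indistinguishable. This contradicts the hypothesis. Therefore a distinguishable pair must satisfy condition $(i)$ of Theorem~\ref{thm2} for at least one party, and that party marks the pair alone.

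\textbf{Main obstacle.} The delicate point is Step 1: converting ``the symmetric difference hull excludes $0$'' into a genuine arc bound, including the modulo-$2\pi$ wrapping. The argument via the real point $\cos\delta$ together with anchoring at $\theta_1$ is what I would use to handle it. The reduction of marking to a single party's local task is taken from the proof of Theorem~\ref{thm2}, and only its ``if'' direction is needed here, which is the easy direction.
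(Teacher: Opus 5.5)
Your proof is correct and follows essentially the same route as the paper: assume neither party satisfies the single-party criterion $0\in con\{e^{\mathbbm{i}(\theta_i-\theta_j)}\}$ (resp.\ the analogous one for the $\Theta_j$), deduce that all phase differences on each side are below $\pi/2$, and conclude that the sums $\theta_i+\Theta_j$ lie in an arc shorter than $\pi$, so the global hull misses the origin, a contradiction. Your $\cos\delta$ argument with anchoring at $\theta_1$ handles the modulo-$2\pi$ wrapping and bounds the full spread more carefully than the paper, which only asserts $|\omega_i-\omega_j|+|\Omega_i-\Omega_j|<\pi$ with matched indices.
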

\begin{proof}
    Let us take two distinguishable bipartite unitaries $W_1=W_1^A\otimes W_1^B$ and $W_2=W_2^A\otimes W_2^B$, where  $eig((W_1^A)^\dagger W_2^A)\in\{e^{\mathbbm{i}\omega_j}\}_j$ and $eig((W_1^B)^\dagger W_2^B)\in\{e^{\mathbbm{i}\Omega_j}\}_j$. The unitaries are distinguishable, which entails,
    \be
    con\{e^{\mathbbm{i}(\omega_i+\Omega_j)}\}_{i,j=1}^d= 0.
    \ee
    The above equation is same as,
    \be\label{d_con}
con\{1,e^{\mathbbm{i}(\omega_k-\omega_1)},e^{\mathbbm{i}(\Omega_k-\Omega_1)},e^{\mathbbm{i}(\omega_k+\Omega_k-\omega_1-\Omega_1)})\}_{k=2}^d= 0.
    \ee

Suppose, $W_1$ and $W_2$ are not markable. The relevant criteria is,
\bea
&(i)&con\{e^{\mathbbm{i}(\omega_i-\omega_j)}\}_{i,j}\neq 0,\nonumber\\
&(ii)&con\{e^{\mathbbm{i}(\Omega_i-\Omega_j)}\}_{i,j}\neq 0.
\eea
The above convex sets contain $1$. Therefore, the above conditions translate into the following facts:
\bea\label{1_2}
&&|\omega_i-\omega_j|<\pi/2, \forall i,j;\nonumber\\
&&|\Omega_i-\Omega_j|<\pi/2, \forall i,j.
\eea
This implies, $|\omega_i-\omega_j|+ |\Omega_i-\Omega_j|<\pi$. This inference establishes the impossibility of \eqref{d_con}, since all points in the corresponding convex set lie entirely within the first two quadrants, and none of them equals 
$-1$. This makes $W_1$ and $W_2$ not distinguishable which emerges contradictory with our initial assumption. On the other way, if \eqref{d_con} holds to be true, at least any of the below equations needs to be true:
\bea
&(i)&|\omega_k-\omega_1|\geq \pi,\nonumber\\
&(ii)&|\Omega_k-\Omega_1|\geq \pi,\nonumber\\
&(iii)&|\omega_k+\Omega_k-\omega_1-\Omega_1|\geq \pi.
\eea
These contradict \eqref{1_2}, which means the unitaries are markable. This completes the proof. 
\end{proof}
This property does not hold if we increase the number of the parties. Let us consider two tripartite unitaries acting on $\mathbb{C}^2\otimes\mathbb{C}^2\otimes\mathbb{C}^2$ as following:
 \bea\label{W_prime}
W'_1 &=& \ket{0}\bra{0}+e^{\mathbbm{i}\beta_1}\ket{1}\bra{1}\otimes\ket{0}\bra{0}+e^{\mathbbm{i}\beta_2}\ket{1}\bra{1}\nonumber\\
&&\otimes\ket{0}\bra{0}+e^{\mathbbm{i}\beta_3}\ket{1}\bra{1},\nonumber\\
W'_2 &=& \ket{0}\bra{0}+e^{-\mathbbm{i}\beta_4}\ket{1}\bra{1}\otimes\ket{0}\bra{0}+e^{-\mathbbm{i}\beta_5}\ket{1}\bra{1},\nonumber\\
&&\otimes\ket{0}\bra{0}+e^{-\mathbbm{i}\beta_6}\ket{1}\bra{1}.
    \eea
\begin{thm}
    Two tripartite unitaries described at \eqref{W_prime} are distinguishable but not markable when $\sum_{i=1}^6 \beta_i=\pi$ and $\beta_1+\beta_4<\pi/2, \beta_2+\beta_5<\pi/2$ and $\beta_3+\beta_6<\pi/2$.
\end{thm}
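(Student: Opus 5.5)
The proof has two parts: global distinguishability, which is a direct eigenvalue computation, and non-markability, which reduces to local indistinguishability at every party, following the reasoning in the proof of Theorem~\ref{thm2}.

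\emph{Distinguishability.} Set $\gamma_1=\beta_1+\beta_4$, $\gamma_2=\beta_2+\beta_5$ and $\gamma_3=\beta_3+\beta_6$, so that $\gamma_1+\gamma_2+\gamma_3=\pi$ and each $\gamma_k\in[0,\pi/2)$. Each local factor of $W_1'^\dagger W_2'$ has spectrum $\{1,e^{-\mathbbm{i}\gamma_k}\}$. The spectrum of the product unitary is therefore the set of eight phases $e^{-\mathbbm{i}s}$, where $s$ ranges over all subset sums of $\{\gamma_1,\gamma_2,\gamma_3\}$. In particular it contains $1$ (empty sum) and $e^{-\mathbbm{i}\pi}=-1$ (full sum). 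Hence $0=\tfrac12\cdot 1+\tfrac12\cdot(-1)$ lies in the convex hull. By the criterion of subsection~\ref{ssA}, $W_1'$ and $W_2'$ are perfectly distinguishable, for instance with the product probe $\otimes_k\ket{+}$.

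\emph{Non-markability.} Marking the pair is equivalent to LOCC discrimination of $W_1'\otimes W_2'$ and $W_2'\otimes W_1'$, where party $k$ holds $W_1'^{(k)}\otimes W_2'^{(k)}$ or $W_2'^{(k)}\otimes W_1'^{(k)}$. I would argue as in Theorem~\ref{thm2}: the two candidates differ at every party, but in a one-way LOCC protocol the identity can only be learned by some party perfectly discriminating its own two local unitaries.

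The step I expect to be the main obstacle is justifying that no party can gain partial information that helps later ones. Here the two candidates remain possible until someone discriminates perfectly, and there are only two hypotheses. Any local measurement outcome that does not conclusively exclude a hypothesis leaves both alive. For perfect discrimination, along every branch some party must eventually perform an error-free two-hypothesis discrimination of its local pair, given both hypotheses. Communication does not change the local unitaries, so this is exactly perfect discrimination of the local pair. I would make this explicit with a short induction on the number of rounds, and it is at this point that the paper's informal claim needs care.

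It then remains to check each party. At party $k$ the relevant operator is $(W_1'^{(k)})^\dagger W_2'^{(k)}\otimes (W_2'^{(k)})^\dagger W_1'^{(k)}$, with eigenvalues $\{1,e^{\mathbbm{i}\gamma_k},e^{-\mathbbm{i}\gamma_k}\}$, exactly as in the Example. Since $0\le\gamma_k<\pi/2$, all three points have real part $\cos\gamma_k\ge\cos\gamma_k>0$. The convex hull therefore lies in the open right half-plane and excludes the origin. So no party can perfectly discriminate its local pair, and the unitaries are not markable.
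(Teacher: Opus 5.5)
Your proof uses the same two computations as the paper. For distinguishability, the spectrum of $W_1'^{\dagger}W_2'$ contains both $1$ and $e^{-\mathbbm{i}\pi}=-1$. For non-markability, each party's local spectrum $\{1,e^{\pm\mathbbm{i}\gamma_k}\}$ lies in the open right half-plane. The hypotheses in fact force $0<\gamma_k<\pi/2$, because $\gamma_k=\pi-\gamma_{k'}-\gamma_{k''}$. You justify the reduction to local discrimination more explicitly than the paper, which only reuses the informal claim from Theorem~\ref{thm2}.

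One claim is false and should be removed: the product probe $\ket{+}^{\otimes 3}$ does \emph{not} distinguish $W_1'$ from $W_2'$. For any product probe $\otimes_k\ket{\psi_k}$, even with local ancillas, the overlap factorizes as $\prod_k\bra{\psi_k}(W_1'^{(k)})^\dagger W_2'^{(k)}\ket{\psi_k}$. Each factor lies in the convex hull of $\{1,e^{-\mathbbm{i}\gamma_k}\}$, which misses $0$ because $\gamma_k<\pi$. For $\ket{+}^{\otimes 3}$ concretely you get $\prod_k\frac{1+e^{-\mathbbm{i}\gamma_k}}{2}\neq 0$. A working probe must be entangled, for example $(\ket{000}+\ket{111})/\sqrt{2}$, which gives $\frac12\bigl(1+e^{-\mathbbm{i}(\gamma_1+\gamma_2+\gamma_3)}\bigr)=0$. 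Your remark contradicts the very phenomenon the theorem exhibits: the pair is distinguishable, but not with product resources. Your conclusion does not depend on it, since the convex-hull criterion already gives distinguishability.

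On the induction you propose, the invariant ``both hypotheses are still possible'' suffices in the paper's one-way model. There each party applies its two boxes together to a fresh probe, measures once, and passes the result on. In a general multi-round LOCC protocol it is too weak: a partial measurement can leave both hypotheses possible while making that party's conditional local states orthogonal, to be resolved later. The quantity to track is the overlap $\langle\phi_1|\phi_2\rangle$ of every party's conditional local states under the two hypotheses, and the invariant is that it is nonzero. This holds before a party's boxes act, since the two states coincide. It survives the box application (with the two boxes applied together) by the local eigenvalue criterion. For any instrument $\{K_o\}$ some outcome preserves it, because $\sum_o\langle K_o\phi_1|K_o\phi_2\rangle=\langle\phi_1|\phi_2\rangle\neq 0$. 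The resulting transcript then has positive probability under both hypotheses, so no decision rule can be error-free.
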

\begin{proof}
       For $W'_1$ and $W'_2$ to be distinguishable, we need,
       \bea
con\{eig(W^{'\dagger}_1 W'_2)\} &=& 0;
       \eea
  hence, 
  \bea
&&con\left\{1,e^{\mathbbm{i}(\beta_1+\beta_4)},e^{\mathbbm{i}(\beta_2+\beta_5)},e^{\mathbbm{i}(\beta_3+\beta_6)},e^{\mathbbm{i}(\beta_1+\beta_4+\beta_2+\beta_5)},\right.\nonumber\\
&&\left. e^{\mathbbm{i}(\beta_1+\beta_4+\beta_3+\beta_6)}, e^{\mathbbm{i}(\beta_3+\beta_6+\beta_2+\beta_5)}, e^{\mathbbm{i}(\beta_1+\beta_4+\beta_2+\beta_5+\beta_3+\beta_6)}\right\}=0\nonumber\\
  \eea
and this is, indeed, possible as $\sum_i \beta_i=\pi$.

For marking, Alice's criteria is,
\bea
con\{1,e^{\mathbbm{i}(\beta_1+\beta_4)},e^{-\mathbbm{i}(\beta_1+\beta_4)}\} &=& 0.
\eea
As $\beta_1+\beta_4<\pi/2$, the above equation is not valid. Similarly, one can check Bob and Charlie too can not mark these two unitaries as $\beta_2+\beta_5<\pi/2$ and $\beta_3+\beta_6<\pi/2$.
\end{proof}
This result can be read as the stronger version of \emph{nonlocality without entanglement}. From last two theorems, we can infer that, the lowest number parties for this phenomenon is three when only two unitaries are considered. The minimal number of unitaries for bipartite setting is left as a future problem.

All the previous theorems address the relationship between three different tasks—GD, LD, and LM. In all these cases, only two unitary operations are considered. We now turn our attention to more general marking schemes and investigate the (non-)equivalence between these protocols.
\begin{thm}
    $r$-LM always imply $s$-LM, where $s<r$.
\end{thm}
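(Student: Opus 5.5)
The plan is to reduce $s$-LM to $r$-LM by simulation. The parties pad any $s$-subset of the set up to an $r$-subset using unitaries whose identities they already know, and then run the $r$-LM protocol. Fix a known set $\{U_i\}_{i=1}^m$ that is $r$-markable, and fix $s<r$. Take an arbitrary $s$-subset $T=\{U_{i_1},\dots,U_{i_s}\}$ that is handed to the $n$ parties in unknown order. In the language of the paper, the parties must locally distinguish the family $\{\mathcal{P}(\otimes_{a=1}^s U_{i_a})\}$ of permuted tensor products. Choose any $r-s$ further indices $j_1,\dots,j_{r-s}\notin\{i_1,\dots,i_s\}$; this is possible since $r\le m$. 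Let $T'=T\cup\{U_{j_1},\dots,U_{j_{r-s}}\}$. Each party $k$ knows the local factors $U^{(k)}_{j_b}$ of the padding unitaries, so party $k$ can implement them locally on its own ancillary systems with no communication.

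The key step is to argue that the $r$-LM protocol for $T'$, run on the $s$ unknown boxes together with the $r-s$ self-implemented known boxes placed in fixed slots $s+1,\dots,r$, perfectly identifies the order of the first $s$ boxes. The argument runs as follows. The family of $s$-orderings embeds into the family of $r$-orderings of $T'$ as the sub-family $\{\mathcal{P}(\otimes_{a} U_{i_a})\otimes U_{j_1}\otimes\cdots\otimes U_{j_{r-s}}\}$, with the padding in fixed positions. An LOCC protocol that perfectly distinguishes all $r!$ permutations in particular perfectly distinguishes this sub-family. That protocol's output then reveals the identity of each of the first $s$ slots. Every operation used is either part of the original LOCC protocol or a local application of a known unitary, so the composite protocol is still LOCC. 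This is the same kind of nesting used in the proof of Theorem~\ref{1}, where a marked slot stays untouched while the others are processed.

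I expect the main obstacle to be checking that the reduction is legitimate within the paper's operational model, where each party holds one tensor factor of each unitary and communication is one-way. The padding must respect the partite structure: $U_{j_b}=\otimes_k U^{(k)}_{j_b}$ is product, so the $k$th factor of every padding unitary is available to party $k$ alone. The order in which parties act in the $r$-LM protocol is unchanged, since the simulated boxes are used exactly where the protocol queries them. I would also note the edge case $s=1$: there the statement says $r$-LM implies LD, and this follows from the same embedding with a single unknown slot. Finally, since $T$ was arbitrary, every $s$-subset is markable, which gives $s$-LM.
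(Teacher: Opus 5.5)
For $s\ge 2$ your padding argument is correct and uses the same idea as the paper's proof. The paper fixes one known unitary $U_l$ in a slot, notes that the sub-family $U_l\otimes\mathcal{S}_{\mathcal{P}[\{r-1\}]}$ must be LOCC distinguishable, and recurses. You instead pad with all $r-s$ known unitaries at once and say explicitly why the reduction stays LOCC: each party applies its factor of a known product unitary locally. That is the justification behind the paper's looser remark that the protocol ``does not extract any information'' from $U_l$. Your argument does use the fact that the parties know which $s$-subset they hold, since that is how you choose $j_b\notin\{i_1,\dots,i_s\}$. This known-subset reading is the one the paper adopts in its $2$-LM example and in its proof.

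The step that fails is your remark on $s=1$. Under the known-subset reading the family $\{\mathcal{P}(U_{i_1})\}$ has one element. The embedding then gives only the trivial fact that a known unitary can be ``marked''; it says nothing about telling $U_{i_1}$ apart from the other $U_i$. Local discrimination would need the index $i_1$ to be unknown. In that case the parties cannot pick padding indices that avoid it, so the reduction breaks. The claim is also false outright. The pair $V_1',V_2'$ in the Example after Theorem~\ref{thm2} is $2$-markable (with $m=2$ the two readings coincide), yet it is globally, hence locally, indistinguishable. So $2$-LM does not imply LD. You should drop that remark and state that the theorem holds for the known-subset notion of $s$-LM, for which $1$-LM is trivial. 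The paper's description of $1$-LM as LD cannot be reconciled with both this theorem and Theorem~\ref{thm2}.
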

\begin{proof}
    We start with an arbitrary set of $m$ unitaries and denote it by $\mathcal{S}\equiv\{U_j\}_{j=1}^m$. We assume that any $r$ number of unitaries can be marked successfully, that means this set is $r$-LM. The marking properties of a set of $r$ unitaries boils down to a LOCC distinguishability problem for the set of unitaries $\mathcal{S}_{\mathcal{P}[\{r\}]}$. Similarly, this set can further be decomposed into disjoint groups $M_l$ as,
    \begin{equation}\label{decompose}
        M_l=U_l\otimes\mathcal{S}_{\mathcal{P}[\{r\}/l]}=U_l\otimes\mathcal{S}_{\mathcal{P}[\{r-1\}]},
    \end{equation}
where $l$ can take values $\{1,\cdots,r\}$.
    
    For each group $M_l$ corresponding to a fixed $l$, the marking scheme reduces to the LOCC distinguishability of all possible sets obtained by considering permutations of $(r-1)$ unitaries, denoted by $\mathcal{S}_{\mathcal{P}[\{r-1\}]}$, where the unitary $U_l$ is excluded. 
Since $U_l$ is fixed, the discrimination scheme does not extract any information from it. Therefore, in order for the set of $r$ unitaries to be locally distinguishable, the sets $\mathcal{S}_{\mathcal{P}[\{r-1\}]}$ must themselves be locally distinguishable. This implies that $(r-1)$-LM is possible. 

By applying a similar decomposition to $\mathcal{S}_{\mathcal{P}[\{r-1\}]}$, one can further show that $(r-2)$-LM is also possible. Continuing this argument recursively, we conclude that the given set of unitaries is $s$-LM for all $s<r$.
    
\end{proof}
A natural question that arises is whether $r$-LM implies $s$-LM for $s>r$. The answer, however, is not straightforward. In the following theorem, we provide a counterexample to show that $s$-LM does not necessarily imply $(s+1)$-LM.
\begin{thm}
    $r$-LM does not imply $s$-LM, where $s>r$.
\end{thm}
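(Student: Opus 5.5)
The statement is existential, so I will give an explicit counterexample. Following the introduction, I aim for three bipartite product unitaries $U_1,U_2,U_3$ in which every pair is markable (2-LM holds) but the triple is not (3-LM fails). This gives $r=2$, $s=3$, which suffices. To keep every eigenvalue computation in the form used for Theorem~\ref{thm2}, I take all local factors diagonal on $\mathbb{C}^2$: $U_j=(\ket{0}\bra{0}+e^{\mathbbm{i}a_j}\ket{1}\bra{1})\otimes(\ket{0}\bra{0}+e^{\mathbbm{i}b_j}\ket{1}\bra{1})$. Then $(U_i^A)^\dagger U_j^A$ has eigenvalues $\{1,e^{\mathbbm{i}(a_j-a_i)}\}$, and all the relevant convex hulls reduce to conditions on phase differences.

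\emph{Step 1: 2-LM.} Marking a pair $\{U_i,U_j\}$ means locally distinguishing $U_i\otimes U_j$ from $U_j\otimes U_i$. As in the proof of Theorem~\ref{thm2}, it suffices that one party can already distinguish its two local operators. For Alice these have relative eigenvalues $\{1,e^{\pm\mathbbm{i}(a_j-a_i)}\}$, whose convex hull contains $0$ iff $|a_j-a_i|\ge\pi/2$ (mod $2\pi$). So I need, for every pair, either $|a_i-a_j|\ge\pi/2$ or $|b_i-b_j|\ge\pi/2$. A natural choice is $a=(0,\pi/2,\pi/2)$ and $b=(0,0,\pi/2)$, or a variant with angles slightly above $\pi/2$ for robustness. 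With this choice each pair is markable by a single party after trivial communication.

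\emph{Step 2: failure of 3-LM.} This is the main obstacle, because proving that no LOCC protocol distinguishes the six permuted products $\mathcal{P}(U_1\otimes U_2\otimes U_3)$ is a statement about all protocols. My plan is to mirror the reduction used in Theorem~\ref{thm2}.

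The first mover, say Alice, holds the three local operators $U^A_{\pi(1)}\otimes U^A_{\pi(2)}\otimes U^A_{\pi(3)}$. In the example above $U_2^A=U_3^A$, so the permutations differing by swapping labels 2 and 3 have identical Alice parts; Alice can only separate them into classes. Bob likewise cannot distinguish permutations that swap labels 1 and 2.

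I will then argue as follows. Any LOCC protocol must, at its last round, leave the remaining ambiguity resolvable by a single party. Neither Alice's coarse-graining nor Bob's coarse-graining alone resolves all six permutations. Moreover, any nontrivial measurement by the first party on its three-fold product probe (here I would show the hulls do not contain $0$) cannot perfectly split its own equivalence classes. It therefore leaves a residual set that still contains a pair the other party cannot resolve.

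\emph{Step 3: alternative construction.} If that argument fails because Alice can perfectly separate her classes, I will tune the angles instead. I would pick $a_j,b_j$ so that each pairwise local difference is just above $\pi/2$, giving Step 1. The three-fold products then give differences such as $(a_2-a_1)+(a_3-a_2)+(a_1-a_3)$ combined with $\pm$ structure. I would try to make every party's three-copy discrimination of some two permutations impossible while preserving the pairwise conditions. The key check is computing the convex hull condition for $U^A_{\pi}{}^\dagger U^A_{\pi'}$ over transpositions and 3-cycles, and showing that some pair of permutations is locally indistinguishable at both parties. By the last-step argument of Theorem~\ref{thm2}, that blocks 3-LM.
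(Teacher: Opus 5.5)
\textbf{There is a genuine gap: the proposal never proves that 3-LM fails, and that failure is the whole content of the theorem.} Your 2-LM step is fine. Your triple $U_1=\I\otimes\I$, $U_2=S\otimes\I$, $U_3=S\otimes S$ (with $S=\ket{0}\bra{0}+\mathbbm{i}\ket{1}\bra{1}$) can in fact be made to work, but the mechanisms you invoke in Step 2 and Step 3 cannot work, for three reasons.

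First, the parenthetical claim in Step 2 is false. Alice's three classes are pairwise perfectly distinguishable: for example, $(\I\otimes S\otimes S)^\dagger(S\otimes\I\otimes S)=S\otimes S^\dagger\otimes\I$ has spectrum $\{1,\pm\mathbbm{i}\}$, whose convex hull contains $0$.

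Second, in your diagonal family, 2-LM already forces every pair of the six permuted products to be distinguishable by a single party. A transposition of labels $i,j$ gives exactly the relative spectrum $\{1,e^{\pm\mathbbm{i}(a_j-a_i)}\}$ (or its $b$-analogue) that you tested for 2-LM. A 3-cycle gives relative phases $\delta_1,\delta_2,\delta_3$ with $\delta_1+\delta_2+\delta_3=0$, and its spectrum contains $1$ and $e^{\pm\mathbbm{i}\delta_k}$ for every pairwise difference. Consequently the Step 3 goal (a pair of permutations indistinguishable at both parties) cannot coexist with 2-LM. The last-round reduction of Theorem~\ref{thm2}, which concerns only two candidates, therefore yields no obstruction.

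Third, the claim that ``the residual set still contains a pair the other party cannot resolve'' ignores multi-round protocols in which a party keeps slots in reserve. Alice can feed $(\ket{01}+\ket{10})/\sqrt2$ into her first two slots only. Each outcome then excludes one possible position of her $\I$ factor, and she still holds her third slot for a later round.

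What is missing is an argument that no single probe can make all the required pairs orthogonal simultaneously; the obstruction has to be joint, not pairwise. The paper gets it in two moves. It gives all three unitaries the same Bob factor $T$, so LOCC reduces to Alice alone. It then expands a general entangled probe in the computational basis. Orthogonality of $Z_3^A\otimes Z_1^A\otimes Z_2^A$ and $Z_3^A\otimes Z_2^A\otimes Z_1^A$ forces $\mathbf{R}_1=0$ and $\mathbf{R}_2=\mathbf{R}_3=\frac12$ on slots 2 and 3. With those weights, the overlap of $Z_1^A\otimes Z_2^A\otimes Z_3^A$ and $Z_1^A\otimes Z_3^A\otimes Z_2^A$ equals $\cos(5\pi/4)\neq0$.

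Your example admits the same treatment. Consider the three pairs $U_1\otimes U_2\otimes U_3$ vs $U_2\otimes U_1\otimes U_3$, then $U_3\otimes U_1\otimes U_2$ vs $U_3\otimes U_2\otimes U_1$, and $U_1\otimes U_3\otimes U_2$ vs $U_2\otimes U_3\otimes U_1$. Each swaps labels $1,2$ on a different pair of slots, and in each the Bob factors coincide. Alice's relative operator is $S\otimes S^\dagger$ on the two affected slots. Orthogonality therefore forces her computational-basis bits on those two slots to differ with probability one. Three bits cannot be pairwise different, so no common probe, even an entangled global one, distinguishes all six, and hence no LOCC protocol does either. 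Like the paper's argument, this rules out a common probe rather than every adaptive scheme.
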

\begin{proof}
    This proof is by construction. We first consider specific three unitaries where $2$-LM is possible. Then we prove that $3$-LM can not be done successfully for the same set of unitaries. Let us take three unitaries as follows:
    \bea\label{u_th6}
    Z_1 &=& Z_1^A\otimes T\nonumber\\
    Z_2 &=& Z_2^A\otimes T\nonumber\\
    Z_3 &=& Z_3^A\otimes T,
    \eea
where $Z_1^A=\ket{0}\bra{0}+\mathbbm{i}\ket{1}\bra{1}, Z_2^A=\ket{0}\bra{0}+\ket{1}\bra{1}$ and $Z_3^A=\ket{0}\bra{0}+e^{\mathbbm{i}\frac{5\pi}{4}}\ket{1}\bra{1}$. First, we prove the perfect $2$-LM for this set of unitary operations. As Bob has the same unitary on his side for the whole set, the local distinguishability reduces to the distinguishability of Alice's unitaries. For $2$-LM, we need to distinguish between three sets of unitaries:
\bea
\mathcal{A}_1 &=& \{Z_1^A\otimes Z_2^A, Z_2^A\otimes Z_1^A \}\nonumber\\
\mathcal{A}_2 &=& \{Z_1^A\otimes Z_3^A, Z_3^A\otimes Z_1^A \}\nonumber\\
\mathcal{A}_3 &=& \{Z_3^A\otimes Z_2^A, Z_2^A\otimes Z_3^A \}
\eea
For the set $\mathcal{A}_1$, we can calculate $eig((Z_2^A)^\dagger Z_1^A\otimes (Z_1^A)^\dagger Z_2^A)= \{1,\mathbbm{i},-\mathbbm{i}\}$. It is easy to check $\min |con\{1,\mathbbm{i},-\mathbbm{i}\}|=0$. In a similar fashion, for the set $\mathcal{A}_2$, $\min |con\{1,e^{\mathbbm{i}\frac{5\pi}{4}},e^{-\mathbbm{i}\frac{5\pi}{4}}\}|=0$ and for the set $\mathcal{A}_3$, $\min |con\{1,e^{\mathbbm{i}\frac{3\pi}{4}},e^{-\mathbbm{i}\frac{3\pi}{4}}\}|=0$. Henceforth, the unitaries are $2$-LM.

Now, we want to check $3$-LM of the unitaries of \eqref{u_th6}. For this proof, we need to check the distinguishability of following six unitaries:
\[
\{Z_1^A\otimes Z_2^A\otimes Z_3^A, Z_1^A\otimes Z_3^A\otimes Z_2^A, Z_2^A\otimes Z_1^A\otimes Z_3^A,
\]
\[
Z_2^A\otimes Z_3^A\otimes Z_1^A,Z_3^A\otimes Z_1^A\otimes Z_2^A,Z_3^A\otimes Z_2^A\otimes Z_1^A\}.
\]
The most general probe for this task would be a $6$-qubit entangled state, which can be written as $\ket{\phi_p}=\sum_{i,j,k,l,m,n=0}^1 C_{i,j,k,l,m,n}\ket{i}\ket{j}\ket{k}\ket{l}\ket{m}\ket{n}$. The unitaries will act on the first three qubits. Let us check the distinguishability of last two unitaries, which implies that,
$\la \psi_p|(Z_3^A)^\dagger Z_3^A\otimes (Z_1^A)^\dagger Z_2^A\otimes (Z_2^A)^\dagger Z_1^A\otimes\I\otimes\I\otimes\I|\psi_p\ra = 0$. Consequently,
\bea\label{con_6}
&&\la \psi_p|(Z_3^A)^\dagger Z_3^A\otimes (Z_1^A)^\dagger Z_2^A\otimes (Z_2^A)^\dagger Z_1^A\otimes\I\otimes\I\otimes\I|\psi_p\ra\nonumber\\
&=&\sum_{i,j,j',k,k',l,m,n=0}^1 C^*_{i,j,k,l,m,n}C_{ij'k'lmn}  \bra{j}(Z_1^A)^\dagger Z_2^A\ket{j'}\nonumber\\
&&\hspace{1 cm}\bra{k}(Z_2^A)^\dagger Z_1^A\ket{k'}\nonumber\\
&=&\sum_{i,j,k,l,m,n=0}^1 |C_{i,j,k,l,m,n}|^2\bra{j}(Z_1^A)^\dagger Z_2^A\ket{j}\bra{k}(Z_2^A)^\dagger Z_1^A\ket{k}\nonumber\\
&=&1(\mathbf{R}_1)+\mathbbm{i}(\mathbf{R}_2)-\mathbbm{i}(\mathbf{R}_3),
\eea
where
\bea
\mathbf{R}_1 &=& \sum_{i,l,m,n}(|C_{i,j=0,k=0,l,m,n}|^2+|C_{i,j=1,k=1,l,m,n}|^2),\nonumber\\
\mathbf{R}_2 &=& \sum_{i,l,m,n}|C_{i,j=1,k=0,l,m,n}|^2,\nonumber\\
\mathbf{R}_3 &=& \sum_{i,l,m,n}|C_{i,j=0,k=1,l,m,n}|^2.
\eea
By normalization, $\mathbf{R}_1+\mathbf{R}_2+\mathbf{R}_3=1$. So the expression at \eqref{con_6} becomes zero, if and only if $\mathbf{R}_2=\mathbf{R}_3=\frac12$.
Now we calculate the distinguishability of first two unitaries, which implies that,
$\la \psi_p|(Z_1^A)^\dagger Z_1^A\otimes (Z_2^A)^\dagger Z_3^A\otimes (Z_3^A)^\dagger Z_2^A\otimes\I\otimes\I\otimes\I|\psi_p\ra = 0$. Consequently,
\bea\label{con_6_1}
&&\la \psi_p|(Z_1^A)^\dagger Z_1^A\otimes (Z_2^A)^\dagger Z_3^A\otimes (Z_3^A)^\dagger Z_2^A\otimes\I\otimes\I\otimes\I|\psi_p\ra\nonumber\\
&=&\sum_{i,j,j',k,k',l,m,n=0}^1 C^*_{i,j,k,l,m,n}C_{ij'k'lmn}  \bra{j}(Z_2^A)^\dagger Z_3^A\ket{j'}\nonumber\\
&&\hspace{1 cm}\bra{k}(Z_3^A)^\dagger Z_2^A\ket{k'}\nonumber\\
&=&\sum_{i,j,k,l,m,n=0}^1 |C_{i,j,k,l,m,n}|^2\bra{j}(Z_2^A)^\dagger Z_3^A\ket{j}\bra{k}(Z_3^A)^\dagger Z_2^A\ket{k}\nonumber\\
&=&1(\mathbf{R}_1)+e^{\mathbbm{i}\frac{5\pi}{4}}(\mathbf{R}_2)+e^{-\mathbbm{i}\frac{5\pi}{4}}(\mathbf{R}_3).
\eea
From the earlier conditions, we get that $\mathbf{R}_2=\mathbf{R}_3=\frac12$. If we put these values of $\mathbf{R}_2$ and $\mathbf{R}_3$ into \eqref{con_6_1}, the expression reduces to non-zero value. That indicates first two unitaries are not distinguishable with $\ket{\phi_p}$, which successfully distinguishes last two unitaries. That proves the impossibility of common probing state to distinguish six unitaries and subsequently, implies the failure of $3$-LM of the unitaries of \eqref{u_th6}.
\end{proof}
This theorem proves to be strikingly opposite to the case of local state marking, where for a set of product states, $r$-LM implies $s$-LM, where $s>r$ (Corollary $2$ of \cite{Sen}). Needless to say, we take the unitaries which are locally implementable, that means they are product unitaries.

There happens to be some interesting phenomena if the parties have the access to only single systems as the probe. Here we show that with this restricted resource, a set unitaries are not distinguishable but they are successfully markable. For that reason, let us take the following three unitaries:
 \bea\label{u_th5}
    W_1 &=& \I\otimes \I\nonumber\\
    W_2 &=& Z\otimes V\nonumber\\
    W_3 &=& X\otimes \I,
    \eea
where $\I=\ket{0}\bra{0}+\ket{1}\bra{1}$, $Z=\ket{0}\bra{0}-\ket{1}\bra{1}$, $X=\ket{0}\bra{1}+\ket{1}\bra{0}$ and $V=\ket{0}\bra{0}+\mathbbm{i}\ket{1}\bra{1}$.
\begin{thm}
    The unitaries of \eqref{u_th5} are distinguishable with entangled system and not distinguishable with single system but can be marked with single system.
\end{thm}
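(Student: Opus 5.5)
The plan is to verify the three claims separately. We read \emph{single system} as probing each local subsystem with a single qubit: no ancilla, and no entanglement between Alice's and Bob's inputs, so the joint input is a product state $\ket{a}\otimes\ket{b}$. For marking, each party may use one such qubit per box and may choose later inputs adaptively, as in the LOCC protocols above. Bob's parts are only $\I$ or $V$, and all the useful information sits in Alice's parts $\I, Z, X$. The only real question is therefore whether Bob's $V$ could rescue single-system discrimination, and the plan shows it cannot.

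\emph{Distinguishability with an entangled probe.} Alice prepares a maximally entangled state $\ket{\Phi^+}$ of her qubit and an ancilla. Applying $\I, Z, X$ to her qubit produces $\ket{\Phi^+}, \ket{\Phi^-}, \ket{\Psi^+}$, which are mutually orthogonal Bell states. A Bell measurement then identifies $W_1, W_2, W_3$ perfectly, and Bob's action is irrelevant. Equivalently, one can check that $0$ lies in the convex hull of the eigenvalues of each $W_i^\dagger W_j$, as in subsection~\ref{ssA}.

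\emph{No single-system discrimination.} Perfect discrimination with input $\ket{a}\otimes\ket{b}$ needs all three pairwise overlaps to vanish:
\begin{align}
\bra{a}Z\ket{a}\,\bra{b}V\ket{b}&=0,\nonumber\\
\bra{a}X\ket{a}&=0,\nonumber\\
\bra{a}ZX\ket{a}\,\bra{b}V\ket{b}&=0.
\end{align}
Since $\bra{b}V\ket{b}=|b_0|^2+\mathbbm{i}|b_1|^2$ is never zero, the first and third equations reduce to conditions on $\ket{a}$ alone. Using $ZX=\mathbbm{i}Y$, the three conditions say that the Bloch vector of $\ket{a}$ has $r_x=r_y=r_z=0$. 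This contradicts $|\vec r|=1$ for a pure state. For a mixed product input, each pure component in its decomposition would have to satisfy the same conditions, so the conclusion is the same. The main obstacle is fixing precisely what \emph{single system} permits. If a general two-qubit input $\ket{\psi}_{AB}$ (entangled across $A$ and $B$, but without ancilla) were allowed, I would first test whether the three overlaps can vanish simultaneously, because Bob's $V$ then adds relative phases that might make them vanish. I expect the intended reading to be product probes, which is consistent with the statement saying the entangled case succeeds.

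\emph{Marking with single systems.} For $2$-LM, Alice alone distinguishes the swapped pairs using product inputs:
\begin{itemize}[label={}]
\item $\I\otimes Z$ versus $Z\otimes\I$: the input $\ket{+}\ket{+}$ gives $\ket{+}\ket{-}$ versus $\ket{-}\ket{+}$;
\item $\I\otimes X$ versus $X\otimes\I$: the input $\ket{0}\ket{0}$ gives $\ket{0}\ket{1}$ versus $\ket{1}\ket{0}$;
\item $Z\otimes X$ versus $X\otimes Z$: the input $\ket{0}\ket{0}$ gives $\ket{0}\ket{1}$ versus $\ket{1}\ket{0}$.
\end{itemize}
In each case the two outputs are orthogonal. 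For $3$-LM, Alice sends $\ket{0}$ into all three boxes and measures in the computational basis. Exactly one box flips, and that box is $X$, i.e.\ $W_3$. She then sends $\ket{+}$ into one of the two remaining boxes and measures in the $\{\ket{\pm}\}$ basis; the outcome distinguishes $\I$ from $Z$, i.e.\ $W_1$ from $W_2$, and the last box is fixed by elimination. Only single-qubit product probes, local measurements and adaptive choices are used. Hence the set is fully markable with single systems, and by the theorem that $r$-LM implies $s$-LM for $s<r$, the $3$-LM protocol alone already covers the lower marking tasks.
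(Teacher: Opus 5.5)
Your entangled-probe argument and your $2$-LM inputs are fine. The $3$-LM protocol, however, is not a valid protocol: it queries Alice's boxes twice. First $\ket{0}$ goes into all three boxes, and then $\ket{+}$ goes into one of those same boxes. The task is single-shot, both in the paper and in your own model (``one such qubit per box''): each local unitary is applied once. That is why the convex-hull criterion of subsection~\ref{ssA} decides distinguishability. With repeated queries, any two unitaries differing by more than a phase eventually become perfectly distinguishable, and the paper's indistinguishability claims would collapse.

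With one query per box and product inputs, the gap cannot be closed by a better fixed product input $\ket{a_1}\ket{a_2}\ket{a_3}$ at Alice.
\begin{itemize}
\item Two orderings related by swapping $u,v$ at positions $k,l$ give orthogonal outputs only if the Bloch vector of $a_k$ or of $a_l$ is orthogonal to the axis attached to $\{u,v\}$. That axis is $z$ for $\{\I,Z\}$, $x$ for $\{\I,X\}$, and $y$ for $\{Z,X\}$.
\item No unit Bloch vector is orthogonal to all three axes. So the nine swaps force the three qubits to be eigenstates of three different Pauli operators.
\item The two orderings related by the matching $3$-cycle then have overlap of modulus one.
\end{itemize}
Bob cannot help with product inputs either. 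The overlap of any two of his orderings is a product of factors equal to $1$, $\bra{b}V\ket{b}=|b_0|^2+\mathbbm{i}|b_1|^2$, or its conjugate, and none of these vanishes.

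The paper's single-shot protocol therefore needs entanglement inside Bob's lab. Alice's input $\ket{0}\ket{+}\ket{0}$ identifies every ordering except $W_{132}$ versus $W_{231}$. Bob then separates $\I\otimes\I\otimes V$ from $V\otimes\I\otimes\I$ with, e.g., $(\ket{01}+\ket{10})/\sqrt{2}$ on his first and third boxes, whose overlap is $(\mathbbm{i}-\mathbbm{i})/2=0$.

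So in the paper ``single system'' means ``no ancilla'', not ``product''. Under your product reading, nothing you give supports the marking claim. Under the no-ancilla reading, the worry you left open is justified. Take the input $(\ket{0}\ket{+}+\ket{1}\ket{-})/\sqrt{2}$ on Alice's and Bob's ports. Then
\begin{itemize}
\item $\la Z\otimes V\ra=\frac12(\bra{+}V\ket{+}-\bra{-}V\ket{-})=0$,
\item $\la X\otimes\I\ra=0$,
\item $\la ZX\otimes V^\dagger\ra=0$.
\end{itemize}
Hence $W_1,W_2,W_3$ are perfectly distinguishable without any ancilla. The paper's step ``Bob's unitaries are pairwise indistinguishable, so only Alice's side matters'' is invalid.

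The theorem is consistent only under an asymmetric reading: inputs product across the two ports for discrimination, and entanglement within one party's lab allowed for marking. You should state it that way. Under that reading, your non-discrimination argument still covers only a fixed product input with a joint measurement. The paper's GD model also lets one probe one side first and adapt, so you need its two-case argument:
\begin{itemize}
\item Alice first would have to separate the three qubit states $\ket{a},Z\ket{a},X\ket{a}$, since Bob's boxes never separate a pair.
\item Bob first would yield a perfect discrimination of $\ket{b}$ from $V\ket{b}$, which is impossible.
\end{itemize}
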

\begin{proof}
For global discrimination, the party can choose two kind of strategies. Firstly, she may not do any partition and try to discriminate three unitaries of dimension $\mathbbm{C}^2\otimes \mathbbm{C}^2$. It can be easily checked that $eig(V^\dagger\I)=\{1,-\mathbbm{i}\}$, which indicates no pair of unitaries of Bob's side are distinguishable. So only useful unitaries are those at Alice's disposal. Therefore, the scheme reduces to the discrimination of three unitaries acting on $\mathbbm{C}^2$. Now she has the access to only single system, these three unitaries are not distinguishable. But if she got the entangled probe $\ket{\phi^+}=\frac{1}{\sqrt{2}}(\ket{00}+\ket{11})$, the unitaries $\I,Z$ and $X$ are distinguishable as they transform $\ket{\phi^+}$ to three orthogonal bell states.

The party may adopt an alternative strategy in which she first operates on one bipartition and, depending on the measurement outcome, proceeds to discriminate the entire ensemble. Suppose she begins with Alice’s unitaries. Regardless of the choice of a single-system probe, it is impossible to perfectly distinguish the three unitaries. Consequently, there will be at least one instance in which Alice can eliminate at most one unitary from consideration. This leaves Bob with the task of distinguishing between the remaining two unitaries, which are not perfectly distinguishable.

Similarly, the party may choose to operate on Bob’s unitaries first. Since, on Alice’s side, any pair of unitaries is perfectly distinguishable, the requirement shifts to Bob’s side, where the party must eliminate at least one unitary for each measurement outcome. Suppose the party performs a measurement $G$ with POVM elements $\{G_i\}_i$. Consider a two-outcome measurement with elements $G_1$ and $G_2$, where $G_1$ eliminates the possibility of $\mathbbm{I}\ket{\bar{\tau}}$ and $G_2$ eliminates $V\ket{\bar{\tau}}$, with $\ket{\bar{\tau}}$ being the optimal single-system probe for the task. Clearly, a two-outcome measurement suffices for this strategy. However, for this protocol to be valid, $G_1$ and $G_2$ must form legitimate POVM elements. If such elements exist, it would imply that the unitaries $\mathbbm{I}$ and $V$ are perfectly distinguishable, which is a contradiction. Henceforth, the unitaries at \eqref{u_th5} are not distinguishable with single system probe.

For marking Alice and Bob needs to discriminate the following set of unitaries via LOCC:
\bea
W_{123}&=&(\I\otimes Z\otimes X)_A\otimes(\I\otimes V\otimes\I)_B\nonumber\\
W_{132}&=&(\I\otimes X\otimes Z)_A\otimes(\I\otimes \I\otimes V)_B\nonumber\\
W_{213}&=&(Z\otimes \I\otimes X)_A\otimes(V\otimes\I\otimes\I)_B\nonumber\\
W_{231}&=&(Z\otimes X\otimes I)_A\otimes(V\otimes\I\otimes\I)_B\nonumber\\
W_{312}&=&(X\otimes \I\otimes Z)_A\otimes(\I\otimes \I\otimes V)_B\nonumber\\
W_{321}&=&(X\otimes Z\otimes \I)_A\otimes(\I\otimes V\otimes\I)_B
\eea
The suffix $A$ and $B$ denotes the unitaries at Alice's and Bob's disposal respectively. Suppose Alice starts the protocol by using the probe $\ket{0}\ket{+}\ket{0}$. With this probe, the set of transformed states would be:
\[
\big\{\ket{\eta_1}=\ket{0}\ket{+}\ket{1},\ket{\eta_2}=\ket{0}\ket{+}\ket{0},\ket{\eta_3}=\ket{0}\ket{-}\ket{1},
\]
\[\ket{\eta_4}=\ket{0}\ket{+}\ket{0},\ket{\eta_5}=\ket{1}\ket{+}\ket{0},\ket{\eta_6}=\ket{1}\ket{-}\ket{0}\big\}
\]
Note that, $\ket{\eta_2}=\ket{\eta_4}$.
Alice will choose a measurement whose POVM elements are
\[
\Big\{\ket{\eta_1}\bra{\eta_1},\ket{\eta_2}\bra{\eta_2},\ket{\eta_3}\bra{\eta_3},\ket{\eta_5}\bra{\eta_5},\ket{\eta_6}\bra{\eta_6},
\]
\[
\I-(\sum_{i=1,2,3,5,6}\ket{\eta_i}\bra{\eta_i})\Big\}
\]
Needless to say, the last outcome does not occur in this protocol. For each outcome, Alice can perfectly identify one unitary, except for the outcome corresponding to $\ket{\eta_2}\bra{\eta_2}$. When this outcome clicks, Alice knows that the unitary is either $W_{132}$ or $W_{231}$, and she communicates this information to Bob. 
Bob then has to distinguish between $\I\otimes\I\otimes V$ and $V\otimes\I\otimes\I$. We can check $eig(\I V\otimes\I\otimes V^\dagger \I)=\{1,\mathbbm{i},-\mathbbm{i}\}$. A suitable convex combination of these eigenvalues can yield zero. Therefore, these unitaries are distinguishable with a single system probe. 
Thus, these six unitaries are distinguishable via LOCC. This implies that the unitaries at \eqref{u_th5} are markable with a single-system probe.
\end{proof}
\section{Conclusion}
In this work, we introduce a novel discrimination task for locally implementable unitary operations, which is termed as \emph{local marking}. We show that this task is fundamentally different with respect to unitary discrimination as well as, this task is not a trivial generalization of marking of quantum states. We prove that perfect local discrimination implies perfect local marking; however, local marking does not imply either LOCC distinguishability or even global distinguishability. We derive a necessary and sufficient condition for the successful local marking of two indistinguishable unitaries. Furthermore, we prove that any two bipartite distinguishable unitaries are always markable.
A striking feature emerges in the multipartite setting: we demonstrate the existence of two tripartite globally distinguishable unitaries that cannot be locally marked, revealing a phenomenon analogous to \emph{nonlocality without entanglement}. We also show that while marking a set of unitaries implies the marking of its subsets, the converse does not hold. Finally, we present examples where entangled probes are required for discrimination, whereas local marking of the same set can be achieved using product probes.

Our work leaves several open directions for future research. The most immediate one is the extension of local marking to general quantum channels. While most of our results focus on bipartite systems, exploring the multipartite ($n$-partite) setting is a natural and promising direction.
There has been significant progress in distinguishability assisted by shared correlations among parties, which enable otherwise impossible local tasks; investigating similar resources in the context of local marking would be highly valuable. Finally, local marking may find applications in quantum information–theoretic tasks such as secret sharing and cryptographic protocols.
\bibliography{ref}
\end{document}